\documentclass[letterpaper, 10 pt, conference]{ieeeconf}  

\IEEEoverridecommandlockouts                              
\usepackage{cite}
\usepackage{amsmath,amssymb,amsfonts}
\usepackage{algorithm} 
\usepackage{algpseudocodex}
\usepackage{graphicx}
\usepackage{textcomp}
\usepackage{xcolor}
\usepackage{url}
\usepackage{dblfloatfix}
\usepackage{mathtools}
\usepackage{pgfplots}
\pgfplotsset{compat=newest}

\usepackage{nccmath}

\let\labelindent\relax
\usepackage{enumerate}
\usepackage[shortlabels]{enumitem}

\usepackage{acronym}
\acrodef{MPC}{Model Predictive Control}
\acrodef{IMMPC}{Internal Model MPC}
\acrodef{RPI}{robust positively invariant}

\usepackage{tikz}
\usetikzlibrary{calc}
\usetikzlibrary {arrows.meta} 

\newtheorem{theorem}{Theorem}

\newtheorem{lemma}[theorem]{Lemma}

\newtheorem{definition}{Definition}

\usepackage{tcolorbox}

\title{\LARGE \bf
Robust IMMPC: An Offset-free MPC for Rejecting Unknown Disturbances
}

\author{Felix Brändle and Frank Allgöwer
\thanks{F. Brändle thanks the International Max Planck Research School for Intelligent Systems (IMPRS-IS) for supporting him.}
}

\begin{document}

\maketitle
\thispagestyle{empty}
\pagestyle{empty}

\begin{abstract}
Output regulation is the problem of finding a control input to asymptotically track reference trajectories and reject disturbances.
This can be addressed by using the internal model principle to embed a model of the disturbance in the controller.
In this work, we present a \acl{MPC} scheme to achieve offset-free control.
To do so, we extend \acl{IMMPC} to general bounded disturbances that must not be generated by the disturbance model.
We show recursive feasibility, constraint satisfaction, and provide convergence conditions for the optimal reachable output.
The proposed controller is validated on a four-tank system.
\end{abstract}

\section{Introduction}
\ac{MPC} is a powerful control method, where the control input is determined by solving an optimal control problem at each time step and applying the first optimal input to the plant \cite{Rawlings1994}.
This allows the user to directly specify control objectives via the cost function and to include constraints, such as actuator limits, to avoid damaging the plant or control system.
However, to guarantee stability and constraint satisfaction, an accurate prediction model is needed \cite{Mayne2000}.
This also requires knowing all disturbances affecting the plant.

One approach to handle unknown disturbances is tube-based \ac{MPC} \cite{Mayne2005, Chisci2001}.
If the disturbance is contained within a bounded set, it is possible to design a pre-stabilizing controller to quantify and bound the effect of the disturbance using \ac{RPI}-sets.
Similarly, stochastic \ac{MPC} employs probabilistic bounds to characterize the effect of random disturbances \cite{schlueter2022}.
By performing a constraint tightening, it is possible to guarantee recursive feasibility, stability, constraint satisfaction, and convergence to a set around the reference \cite{Limon2010}. 
However, for non-vanishing disturbances, the plant does not converge exactly to the desired reference.
To overcome this limitation, offset-free \ac{MPC} can be used.
These methods use the internal model principle to embed a model of the disturbance in the controller \cite{Francis1976}.
This allows the controller to asymptotically reject the effect of the disturbance on the output.
A common approach is to estimate the disturbance using an observer \cite{Pannocchia2003,Pannocchia2015}. 
The estimated disturbance is then used in the prediction model to compute the  desired state and input trajectories, which achieve for offset-free control \cite{Morari2012}.
However, this requires the \ac{MPC} to remain feasible until the disturbance estimate has converged \cite{Maeder2010}.
A different approach uses the so-called velocity or incremental form for prediction \cite{Pannocchia2015a}. 
Instead of optimizing over the input directly, the \ac{MPC} optimizes over the rate of change of the inputs.
The key advantage is that the underlying prediction model in velocity form is independent of constant disturbances.
However, to account for constraints, an estimate of the disturbance is still required \cite{koehler2022}.
Other approaches in velocity form only consider offset-free control with constant references, but do not include constant disturbances in the dynamics \cite{Betti2012}.
Interestingly, as shown in \cite{Pannocchia2015a}, the velocity form embeds a special disturbance observer in its models, correlating the velocity form to disturbance observers.
Despite this equivalence, embedding the observer directly in the \ac{MPC} can be beneficial.
For example, one can apply standard \ac{MPC} techniques to ensure recursive feasibility and constraint satisfaction, without having to consider the interaction between \ac{MPC} and an external observer.
In \cite{braendle2025}, it was shown that for systems without constraints, this leads to the well-known state feedback controller with integrator.
Recent work extended the velocity form to embed a broader class of disturbances \cite{braendle2025}.
This scheme is called \ac{IMMPC} and allows for offset-free control with more general disturbances generated by a known linear system, such as ramps or sinusoids, not only constants.
Furthermore, it introduces additional dynamic filters to extend the state feedback with an integrator structure to a more general formulation by replacing the integrator for example by a PID-architecture.
This provides additional degrees of freedom to improve tracking behavior and disturbance rejection. 

In this work, we extend \ac{IMMPC} to account for disturbances, that are not generated by a linear signal generator.
To this end, we combine \ac{IMMPC} with a tube-based approach to be robust against general bounded disturbances.
We show recursive feasibility, constraint satisfaction, and provide conditions for convergence to the optimal reachable steady state trajectory.
Our approach is also robust against changes in the disturbance generated by the signal generator, such as setpoint changes.
Lastly, we apply the proposed controller to a four-tank system showcasing its capabilities in tracking and rejecting disturbances.

\emph{Notation}
The matrix $I_n$ is the identity matrix of dimension $n$. We omit the index $n$, if is clear from the context.
We denote $\|x\|^2_Q = x^\top Q x$.
We use $x([t_1,t_2])$ with $t_1\in\mathbb{N}$ and $t_2\in\mathbb{N}$ for the stacked vector of $x(t_1)$, $x(t_1-1)$, $\ldots$ $x(t_2)$.
Moreover, we use $P\succ0$ ($P\succeq0)$, if the matrix $P$ is positive (semi-) definite.
Similarly, we use $P\prec0$ ($P\preceq0)$ for negative (semi-) definiteness.
We use $p(z)\{x(t)\}=0$ with $p(z)=\sum_{i=0}^{n_p}p_iz^{-i}$ if $\sum_{i=0}^{n_p}p_i x(t-i)=0$ for all $t\in\mathbb{N}$.
The Minkowski sum and the Pontraygin difference are denoted by $\oplus$ and $\ominus$ and $\sigma\mathbb{X}\coloneq\{\sigma x  \mid x\in\mathbb{X}\}$ with $\sigma\in\mathbb{R}$.

\section{Setup}
In this work, we consider a linear, time-invariant system
\begin{subequations}
	\begin{align}
		  x(t+1) &= Ax(t) + v_x(t) + w_x(t) + Bu(t)\label{eq:Setup:Dynamic}\\
		  e(t) &= Cx(t) + v_e(t) + w_e(t) \label{eq:Setup:Output} 
	\end{align}
\end{subequations}
with $A\in\mathbb{R}^{n \times n}$, $B\in\mathbb{R}^{n \times m}$, $C\in\mathbb{R}^{p \times n}$, state $x(t)\in\mathbb{R}^n$, measurable output $e(t)\in\mathbb{R}^p$, control input $u(t)\in\mathbb{R}^{m}$, and the unknown exogenous disturbances $v_x(t)\in\mathbb{R}^{n}$, $v_e(t)\in\mathbb{R}^{p}$, $w_x(t)\in\mathbb{R}^{n}$, and $w_e(t)\in\mathbb{R}^{p}$.
We assume $(A,B)$ to be controllable and $(A,C)$ to be observable.
Furthermore, the system is subject to state and input constraints $x(t)\in\mathbb{X}\subseteq\mathbb{R}^n$ and  $u(t)\in\mathbb{U}\subseteq\mathbb{R}^m$ for all $t\in\mathbb{N}$ with $\mathbb{X}$ and $\mathbb{U}$ being compact sets.
In this paper, we investigate the output regulation problem.
We aim to design a controller that rejects the effect of unknown disturbances on the output and steers $e(t)$ to zero.
To this end, we split the total disturbance affecting the system into $v_x(t)$, $v_e(t)$, $w_x(t)$ and $w_e(t)$.
The disturbances $v_x(t)$ and $v_e(t)$ are assumed to be generated by a known stable, linear signal generator \cite{Francis1976}
\begin{align}
	\sum_{i=0}^{n_p}p_i v_x(t-i)=0&& \sum_{i=0}^{n_p}p_i v_e(t-i)=0. \label{eq:Setup:DynamicDisturbance}
\end{align}
Alternatively, this can be expressed in terms of a transfer function $p(z)=\sum_{i=0}^{n_p}p_iz^{-i}$ with $p_0=1$ and $p_{n_p}\neq 0$, such that $p(z)\{v_x(t)\}=0$ and $p(z)\{v_e(t)\}=0$. 
For example $p(z)=(1-z^{-1})$ characterizes constant signals and $p(z)=(1-2\cos(\omega_0)z^{-1}+z^{-2})\{v(t)\}=0$ characterizes sinusoids with frequency $\omega_0$.
Furthermore, we consider $w_x(t)$ and $w_e(t)$ with
\begin{align}
	w_x(t)\in\mathbb{W}_x&&w_e(t)\in\mathbb{W}_e
\end{align}
for all $t\in\mathbb{N}$, and $\mathbb{W}_x\subseteq\mathbb{R}^n$ and $\mathbb{W}_e\subseteq\mathbb{R}^p$ being compact sets, which contain the origin as an interior point \cite{Mayne2005}. 
Hence, we assume $w_x(t)$ and $w_e(t)$ to be bounded, but do not impose any condition on their dynamics.
In order to limit the effect of bounded disturbances, we introduce \ac{RPI}-sets.
\begin{definition}
	A set $\mathbb{S}\subseteq\mathbb{R}^n$ is called a \acl{RPI} set for $A\in\mathbb{R}^{n \times n}$, $B\in\mathbb{R}^{n \times n}$, and $\mathbb{W}\subseteq\mathbb{R}^n$, if $A x + Bw \in \mathbb{S}$ for all $x\in\mathbb{S}$ and all $w\in\mathbb{W}$.
\end{definition}
By differentiating between the two classes of disturbances, we can split the total disturbance affecting the system into a potentially large disturbance with a known signal generator and an arbitrarily generated, but bounded disturbance.

Lastly, we assume that the unconstrained output regulation problem is well-posed. This means that there exists a unique state and input trajectory satisfying
\begin{equation}
\begin{aligned}
	x(t\!+\!1) &= A x(t) + v_x(t) + Bu(t) \\
	0 &= C x(t) + v_e(t) \label{ass:Setup:Sylvester}
\end{aligned} 
\end{equation}
for all $t\in\mathbb{N}$ with $p(z)\{x(t)\}=0$, $p(z)\{u(t)\}=0$, $p(z)\{v_x(t)\}=0$ and $p(z)\{v_e(t)\}=0$.
In this work, we combine tube-based \ac{MPC} \cite{Mayne2005} with \ac{IMMPC} \cite{braendle2025} to steer $e(t)$ to zero, despite the system being affected by unknown disturbances.

\section{MPC}
In this section, we design a robust \ac{MPC} for output regulation.
Section\,\ref{sec:MPC:Model} introduces an equivalent representation of the system without requiring $v_x(t)$ and $v_e(t)$ explicitly. 
Section\,\ref{sec:MPC:Controller} and \ref{sec:MPC:RPI} describe how to compute \ac{RPI}-sets for the disturbances. 
Section\,\ref{sec:MPC:MPC} states the final \ac{MPC}.

\subsection{Model} \label{sec:MPC:Model}
In this section, we derive a different representation of \eqref{eq:Setup:Dynamic}, \eqref{eq:Setup:Output}, and \eqref{eq:Setup:DynamicDisturbance}, using only past state, input, and output data. 
\begin{lemma}[\!\!\cite{braendle2025}, Theorem 1] \label{thm:MPC:Model}
	Suppose $x([0,-n_p])$, $u([-1,-n_p])$ and $e([0,1-n_p])$ are generated according to \eqref{eq:Setup:Dynamic}, \eqref{eq:Setup:Output}, and \eqref{eq:Setup:DynamicDisturbance}, then 
	\begin{subequations}
		\begin{align}
			\!\!\!\!\!\!\Delta x(t\!+\!1) \!&=\! A \Delta x(t) + B \Delta u(t) + \!\Delta w_x(t) \label{eq:MPC:Theo:ExtendedDeltaXDyn}\\
			\!\!\!\!\!\!e(t\!+\!1) \!&=\! C \Delta x(t\!+\!1)\!-\!\!\sum_{i=0}^{n_p-1}\! p_{i+1} e(t\!-\!i)\! + \!\Delta w_e(t\!+\!1) \label{eq:MPC:Theo:ExtendedEDyn} \\
			\!\!\!\!\!\!x(t\!+\!1) \!&=\! \Delta x(t\!+\!1)-\!\sum_{i=0}^{n_p-1}\! p_{i+1} x(t-i) \label{eq:MPC:Theo:ExtendedXDyn}\\ 
			\!\!\!\!\!\!u(t) \!&=\! -\!\sum_{i=1}^{n_p}\! p_i u(t-i)\! + \!\Delta u(t), \label{eq:MPC:Theo:ExtendedUDyn}
		\end{align}
	\end{subequations}
	hold for all $t\in\mathbb{N}^0$ with $\Delta x(t) = \sum_{i=0}^{n_p} p_i x(t-i)$, $\Delta w_x(t) \!= \!\sum_{i=0}^{n_p} p_i w_x(t-i)$, and $\Delta w_e(t) \!=\! \sum_{i=0}^{n_p} p_i w_e(t-i)$.
\end{lemma}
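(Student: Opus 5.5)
The plan is to apply the linear filter $p(z)$ to every signal in \eqref{eq:Setup:Dynamic}--\eqref{eq:Setup:Output} and use \eqref{eq:Setup:DynamicDisturbance} to annihilate the structured disturbances. Because $p(z)$ has constant scalar coefficients, it commutes with the constant matrices $A$, $B$, $C$, so no nontrivial commutation issue arises. I will also define $\Delta u(t)=\sum_{i=0}^{n_p}p_i u(t-i)$ and $\Delta e(t)=\sum_{i=0}^{n_p}p_i e(t-i)$, analogously to $\Delta x$, $\Delta w_x$ and $\Delta w_e$. The hypothesis that $x([0,-n_p])$, $u([-1,-n_p])$ and $e([0,1-n_p])$ were generated by the model ensures that all past samples entering these sums at $t=0$ exist and are consistent with the dynamics.

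For \eqref{eq:MPC:Theo:ExtendedDeltaXDyn}, evaluate \eqref{eq:Setup:Dynamic} at times $t-i$, multiply by $p_i$, and sum over $i=0,\dots,n_p$. The left side is $\sum_i p_i x(t+1-i)=\Delta x(t+1)$. The right side becomes $A\Delta x(t)+\sum_i p_i v_x(t-i)+\Delta w_x(t)+B\Delta u(t)$, and the $v_x$ term vanishes by \eqref{eq:Setup:DynamicDisturbance}.

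The output equation \eqref{eq:MPC:Theo:ExtendedEDyn} follows the same way. Apply the filter to \eqref{eq:Setup:Output} at time $t+1$ to get $\Delta e(t+1)=C\Delta x(t+1)+\Delta w_e(t+1)$, where the $v_e$ term drops out. Since $p_0=1$, $\Delta e(t+1)=e(t+1)+\sum_{i=1}^{n_p}p_i e(t+1-i)$. Shifting the index by one gives $\sum_{i=0}^{n_p-1}p_{i+1}e(t-i)$, and moving this sum to the right-hand side yields the claimed identity.

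Equations \eqref{eq:MPC:Theo:ExtendedXDyn} and \eqref{eq:MPC:Theo:ExtendedUDyn} are not dynamics at all. They are the definitions of $\Delta x(t+1)$ and $\Delta u(t)$ solved for the leading sample, again using $p_0=1$ and re-indexing.

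The main point needing care is the range of validity, i.e.\ that the identities hold for all $t\in\mathbb{N}^0$ from the initial time on. Here I must check that $v_x$ and $v_e$ satisfy $p(z)\{\cdot\}=0$ also on the initial window, which is where the assumption that the past data were generated by \eqref{eq:Setup:DynamicDisturbance} is used. I must also check that the required indices match the data windows: for $t=0$, the output equation needs $e(0),\dots,e(1-n_p)$ and the state equation needs $x(0),\dots,x(-n_p)$, which is exactly what the hypotheses supply. Once these are verified, the four identities follow term by term.
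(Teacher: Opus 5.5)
Your proposal is correct and matches the paper's proof: you filter the state and output equations with $p(z)$ so that the $v_x$, $v_e$ terms vanish by \eqref{eq:Setup:DynamicDisturbance}, and you obtain \eqref{eq:MPC:Theo:ExtendedXDyn}--\eqref{eq:MPC:Theo:ExtendedUDyn} by solving the definitions of $\Delta x$ and $\Delta u$ for the leading sample using $p_0=1$. Your check that the data windows at $t=0$ supply exactly the needed past samples is a detail the paper leaves implicit.
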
\vspace{2pt}
\begin{proof}
	To show \eqref{eq:MPC:Theo:ExtendedDeltaXDyn}, we consider $\sum_{i=0}^{n_p} p_i x(t\!+\!1-i)$ and insert \eqref{eq:Setup:Dynamic} with $\Delta u(t)=\sum_{i=0}^{n_p} p_i u(t\!-i)$.
	Due to \eqref{eq:Setup:DynamicDisturbance}, \eqref{eq:MPC:Theo:ExtendedDeltaXDyn} follows directly. 
	\eqref{eq:MPC:Theo:ExtendedEDyn} holds by the same arguments.
	\eqref{eq:MPC:Theo:ExtendedXDyn} and \eqref{eq:MPC:Theo:ExtendedUDyn} follow by inverting the difference equations.
\end{proof}
Lemma\;\ref{thm:MPC:Model} is an equivalent description of \eqref{eq:Setup:Dynamic}, \eqref{eq:Setup:Output}, and \eqref{eq:Setup:DynamicDisturbance} that does not explicitly require $v_x(t)$ and $v_e(t)$.
This will serve as the prediction model for the \ac{MPC}.
However, Lemma\;\ref{thm:MPC:Model} needs past state, input, and output measurements.
Hence, any change in $v_x(t)$ and $v_e(t)$ not described by \eqref{eq:Setup:DynamicDisturbance}, requires a new initialization phase of $n_p$ steps.
The dynamics are still linear and can be described by a state space system with extended state $\xi(t)\in\mathbb{R}^{n_\xi}$, control input $\Delta u(t)$:
\begin{align*}
	\xi(t\!+\!1) = A_\xi \xi(t) + B_\xi \Delta u(t) + B_{1}w(t) + B_{2}\mathbf{w}(t),
\end{align*}\vspace{-20pt}
\begin{align*}
	\xi(t)\!\coloneq\!\begin{bmatrix}
		\Delta x(t) \\
		e([t,t-n_p+1]) \\
		x([t,t-n_p+1]) \\
		u([t-1,t-n_p])
	\end{bmatrix},
\end{align*}
with $w(t)=[w_x^\top(t), w_e^\top(t+1)]^\top\in\mathbb{W}\coloneq \mathbb{W}_x \times \mathbb{W}_e$ and $v(t)=[v_x^\top(t), v_e^\top(t+1)]^\top$.
Furthermore, we define the vector of past disturbances $\mathbf{w}(t)=w([t-1,t-n_p])$ and $\mathbf{v}(t)=v([t-1,t-n_p])$ with $\mathbf{v}(t+1)=S\mathbf{v}(t)$ and $S\in\mathbb{R}^{n_p(n+p)\times n_p(n+p)}$ as state space realization of $p(z)$.
In this paper, we refer to $\Delta x(t)$, $x(t)$, $e(t)$ and $u(t)$ directly or use $\Delta x(t) = C_{\Delta x}\xi(t)$, $x(t) = C_{x}\xi(t)$, $e(t) = C_{e}\xi(t)$, $u(t-1) = C_{u}\xi(t)$ and $e([t,t-n_p+1]) = M_e\xi(t)$.
Note that the extended state $\xi(t)$ is not stabilizable as it embeds the non-stabilizable disturbance state $\mathbf{v}(t)$
\begin{equation}\label{eq:MPC:TiDefinition}
	\begin{split}
	T_i \xi(t)&\coloneq \begin{bmatrix}
		x(t+1-i)-Ax(t-i)-Bu(t-i)\\
		e(t+1-i)-Cx(t+1-i)
	\end{bmatrix}\\
	&\;= w(t-i)+v(t-i) 
\end{split}
\end{equation}
with $i=1,\ldots,n_p$, $T_i\in\mathbb{R}^{(n+p)\times n_\xi}$ and $p_{n_p}x(t-n_p) = \Delta x(t) - \sum_{i=0}^{n_p-1}p_ix(t-i)$.
Combining \eqref{eq:MPC:TiDefinition} for every $i$ yields $\mathbf{v}(t)+\mathbf{w}(t)=T_v\xi(t)$ with $T_v = [T_1^\top,\ldots,T_{n_p}^\top]^\top$.

\subsection{Stabilizing Controller}  \label{sec:MPC:Controller}
In order to bound the effect of $w(t)$, we must first design a stabilizing controller \cite{Mayne2005}.
As the full extended system is not stabilizable, we design a controller to stabilize only \eqref{eq:MPC:Theo:ExtendedDeltaXDyn} and \eqref{eq:MPC:Theo:ExtendedEDyn} to achieve output regulation.
We consider \eqref{eq:MPC:Theo:ExtendedXDyn} and \eqref{eq:MPC:Theo:ExtendedUDyn} only for constraint satisfaction.
The subsystem \eqref{eq:MPC:Theo:ExtendedDeltaXDyn} and \eqref{eq:MPC:Theo:ExtendedEDyn} is stabilizable, as $(A,B)$ is controllable and \eqref{ass:Setup:Sylvester} ensure the existence of a control input to reach $e(t)= 0$ with $\Delta x(t)=0$ and $\Delta u(t)=0$.
Hence, we can design a state feedback controller 
\begin{equation}
	\Delta u(t) = K_x \Delta x(t) + K_e e([t,t\!+\!1-n_p]) \label{eq:MPC:StabilizingController}
\end{equation}
to stabilize \eqref{eq:MPC:Theo:ExtendedDeltaXDyn} and \eqref{eq:MPC:Theo:ExtendedEDyn} with $K_x\in\mathbb{R}^{m \times n}$ and $K_e\in\mathbb{R}^{m \times p n_p}$. 
We can recover $u(t)$ from 
\begin{align}
	u(t) = \Delta u(t) - \sum_{i=1}^{n_p} p_i u(t-i) = K\xi(t) \label{eq:MPC:StateFeedbackControllerDifferenceEq}
\end{align}
with $K\in\mathbb{R}^{m \times n_\xi}$.
Alternatively, we can express the controller as a state space model
\begin{align}
	x_{\mathrm{c}}(t+1) &= A_{\mathrm{c}}x_{\mathrm{c}}(t) +  B_{\mathrm{c}}e(t) \label{eq:MPC:ControllerDyn}\\
	u(t)&=C_{\mathrm{c}}x_{\mathrm{c}}(t) + D_{\mathrm{c}}e(t) + K_x x(t) \label{eq:MPC:ControllerDynOutput}
\end{align}
with $x_{\mathrm{c}}(t)=T_{\mathrm{c}}\xi(t)\in\mathbb{R}^{pn_p}$. 
For $p(z)=1-z^{-1}$, it holds that $A_{\mathrm{c}}=I$, $B_{\mathrm{c}}=K_e$, $C_{\mathrm{c}}=I$, $D_{\mathrm{c}}=K_e$ with $x_{\mathrm{c}}(t)=u(t-1)-K_x (x(t)-\Delta x(t))$. 
Since $A_{\mathrm{c}}=I$, this connects to the classical state feedback controller with integrator structure.
As shown in \cite{braendle2025}, by selecting a different filter than $p(z)$, we can further tune the controller to get a PID-controller acting on $e(t)$ instead of only the integrator.
In the remaining paper, we assume access to such a stabilizing controller gain $K$.

\subsection{RPI-set}  \label{sec:MPC:RPI}
Following the approach in \cite{Mayne2005}, we now derive an \ac{RPI}-set to account for $w_x(t)$ and $w_e(t)$.
To do so, we introduce a nominal model with $\Delta w_x(t)=0$ and $\Delta w_e(t)=0$, such that $B_1w(t)+B_2\mathbf{w}(t)=0$.
In contrast to \cite{Mayne2005}, this does not imply that the nominal system evolves according to $w_x(t)=0$ and $w_e(t)=0$. Instead, it evolves according to a nominal disturbance $w_x(t)=-\sum_{i=1}^{n_p}p_i w_x(t-i)$ and $w_e(t)=-\sum_{i=1}^{n_p}p_i w_e(t-i)$.
This stems from the fact, that $\mathbf{v}(t)+\mathbf{w}(t)=T_v\xi(t)$ can not distinguish between $\mathbf{w}(t)$ and $\mathbf{v}(t)$.
We denote the nominal system by $``\mathrm{n}``$, e.g., $\xi_\mathrm{n}(t)$.
By performing a state space transformation $T\xi_\mathrm{n}(t)$ with $T=[T_0^\top,T_v^\top]^\top$ and $T_0=[C_x^\top,T_\mathrm{c}^\top]^\top$, this results in
\begin{align}
		x_\mathrm{n}(t+1) &= A x_\mathrm{n}(t) + Bu_\mathrm{n}(t) + v_{x,\mathrm{n}}(t)\\
		x_\mathrm{c,n}(t+1) &= B_{\mathrm{c}}Cx_\mathrm{n}(t)+ A_{\mathrm{c}}x_\mathrm{c,n}(t) + B_{\mathrm{c}}v_{e,\mathrm{n}}(t)\\
		e_{\mathrm{n}}(t) &= Cx_\mathrm{n}(t) + v_{e,\mathrm{n}}(t)
\end{align}
with $p(z)\{v_{\mathrm{n},x}(t)\}=0$, $p(z)\{v_{\mathrm{n},e}(t)\}=0$ and initialization $\mathbf{v}_{\mathrm{n}}(0)=T_v \xi_\mathrm{n}(0)$.
Next, we apply the control input
\begin{align}
	u(t)&=u_\mathrm{n}(t) + K(\xi(t)-\xi_{\mathrm{n}}(t))\\
	&=u_\mathrm{n}(t)+C_\mathrm{c}\hat{x}_\mathrm{n}(t)+ D_\mathrm{c}\hat{e}(t) + K_x\hat{x}(t).
\end{align}
The index $``\,\,\hat{ }\,``$ denotes the difference, e.g., $\hat{x}(t)=x(t)-x_\mathrm{n}(t)$.
Then the closed loop behaves according to
\begin{align*}
	\begin{bmatrix}
		\hat{x}(t+1)\\\hat{x}_{\mathrm{c}}(t+1)
	\end{bmatrix} &= 
	\begin{bmatrix}
		A+B(K_x + D_\mathrm{c}C) & B C_\mathrm{c} \\
		B_\mathrm{c} C & A_\mathrm{c}
	\end{bmatrix}
	\begin{bmatrix}
		\hat{x}(t)\\ \hat{x}_{\mathrm{c}}(t)
	\end{bmatrix}\\
	&+\begin{bmatrix}
		I & B D_\mathrm{c}\\ 0 & B_\mathrm{c}
	\end{bmatrix}
	\begin{bmatrix}
		\hat{v}_x(t) + w_x(t) \\ \hat{v}_e(t) + w_e(t)
	\end{bmatrix}
\end{align*}
with $p(z)\{\hat{v}_{\mathrm{n},x}(t)\}=0$, $p(z)\{\hat{v}_{\mathrm{n},e}(t)\}=0$ and initialization $\mathbf{\hat{v}}(0)=\mathbf{v}(0)-T_v \xi_\mathrm{n}(0)$. 
Using the controller from Section\,\ref{sec:MPC:Controller}, it is possible to asymptotically stabilize $\hat{x}(t)$ and $\hat{x}_\mathrm{c}(t)$.
However, $\hat{v}(t)$ is not stabilizable as $p(z)$ is not asymptotically stable, such that we can not compute an \ac{RPI}-set for $\xi(t)$.
To address this, we determine an \ac{RPI}-set only for $\hat{x}(t)$ and $\hat{x}_\mathrm{c}(t)$, while enforcing an initialization for $\mathbf{\hat{v}}(0)$ such $\hat{v}_x(t)$ and $\hat{v}_e(t)$ are contained in a bounded set.
To do so, note that $T_v \xi(t)-\mathbf{v}(t)=\mathbf{w}(t)\in\mathbb{W}^{n_p}$.
Hence, if $T_v \xi_\mathrm{n}(t) = T_v \xi(t)$, then $\mathbf{\hat{v}}(0)\in\mathbb{W}^{n_p}$.
Using that $p(z)$ is stable, we conclude that if $\mathbf{\hat{v}}(0)$ is bounded, so are $\hat{v}_x(t)$ and $\hat{v}_e(t)$.
To formalize this, we construct a convex set $\mathbb{V}\subseteq\mathbb{R}^{n_p(n+p)}$, which is invariant with respect to $p(z)$, i.e., if $\hat{\mathbf{v}}(0)\in\mathbb{V}$ , then $S^t\mathbf{\hat{v}}(0)\in\mathbb{V}$ for all $t\in\mathbb{N}$, where $S$ is a state space realization of $p(z)$, see Section\,\ref{sec:MPC:Model}.
Furthermore, $\mathbb{V}$ must satisfy
\begin{align*}
	\big\{\mathbf{v}\!\mid\! \exists \hat{t}\in\mathbb{N}^0, \,\mathbf{v}(0)\in\mathbb{W}^{n_p}\!:\! \mathbf{v}(t\!+\!1)\!=\!S\mathbf{v}(t),\, \mathbf{v}(\hat{t})=\mathbf{v} \big\}\!\subseteq\!\mathbb{V}.
\end{align*}
This ensures that $\mathbb{V}$ contains all possible disturbances $\mathbf{\hat{v}}(t)$ for $\mathbf{\hat{v}}(0)\in\mathbb{W}^{n_p}$.
If $p(z)=1-z^{-1}$ and $\mathbb{W}$ is convex, a possible choice is $\mathbb{V}=\mathbb{W}$, or in general, a level set of the Lyapunov function of $\mathbf{v}(t\!+\!1)\!=\!S\mathbf{v}(t)$.
Moreover, due to convexity, if we have a  vector $\mathbf{\bar{v}}(t)\in \mathbf{v}(t)\oplus\mathbb{V}$, then
\begin{align}
	\tau \mathbf{\bar{v}}(t) + (1-\tau)T_v\xi(t)\in \mathbf{v}(t)\oplus\mathbb{V} \label{eq:MPC:VConvexInterpolation}
\end{align}
for all $\tau\in[0,1]$.
Hence, instead of enforcing  $T_{v}\xi_\mathrm{n}(t)=T_{v}\xi(t)$, we can use $T_{v}\xi_\mathrm{n}(t) = \tau \mathbf{\bar{v}}(t) + (1-\tau)T_v\xi(t)$, to ensure $T_{v}\xi_\mathrm{n}(t)\in\mathbf{v}(t)\mathbb{V}$.
This provides an additional degree of freedom, which will be used in the following section.
Next, we denote $\mathbb{V}_1\coloneq [I_{n+p},0,\ldots,0]\mathbb{V}$ as the set of all possible $\hat{v}_x(t)$ and $\hat{v}_e(t)$.
Finally, we can compute an \ac{RPI}-set $\mathbb{S}$ for $\hat{x}(t)$ and $\hat{x}_{\mathrm{c}}(t)$ with respect to the disturbance set $\mathbb{V}_1\oplus\mathbb{W}$.
That means if $[\hat{x}(t)^\top,\hat{x}_{\mathrm{c}}(t)^\top]^\top\in\mathbb{S}$, then $x(t)-x_\mathrm{n}(t)\in\mathbb{S}_x\coloneq [I,0]\mathbb{S}$ and $u(t)-u_\mathrm{n}(t)\in\mathbb{S}_u\coloneq [K_x + D_\mathrm{c}C,C_\mathrm{c}]\mathbb{S} \oplus D_\mathrm{c}\mathbb{W}_e \oplus D_\mathrm{c}[0, I]\mathbb{V}_1$.
This forms the basis for robust constraint satisfaction in the \ac{MPC}.

\subsection{Robust MPC} \label{sec:MPC:MPC}
In this section, we state the robust \ac{MPC} for output regulation.
The goal is to steer $e(t)$ to the origin.
However, due to $v_x(t)$ and $v_e(t)$, this does not imply that $x(t)$ and $u(t)$ will also be zero. 
To account for this, we introduce artificial references
\begin{align*}
	\mathcal{Z}_{\mathrm{a}} \coloneq \big\{\xi\in\mathbb{R}^{n_\xi}\mid &  \xi(t\!+\!1) = A_\xi \xi(t), \:\:  \xi(0) = \xi, \\
	&C_u \xi(t) \in \mathbb{U}\ominus \sigma\mathbb{S}_u,\:\: C_x \xi(t) \in \mathbb{X}\ominus \sigma\mathbb{S}_x, \\
	& C_{\Delta x}\xi(t)=0, \:\: \forall t\in\mathbb{N}^0\big\}.
\end{align*}
The variable $\sigma>1$ is a design parameter to enforce strict satisfaction of the constraints, as in \cite{Limon2008,Limon2010}.
The set $\mathcal{Z}_{\mathrm{a}}$ describes all possible initial conditions such that $p(z)\{x_{\mathrm{a}}(t)\}=0$, $p(z)\{u_{\mathrm{a}}(t)\}=0$ and $p(z)\{e_{\mathrm{a}}(t)\}=0$, while $x_{\mathrm{a}}(t)$ and $u_{\mathrm{a}}(t)$ lie within the constraint sets.
How to reduce constraint satisfaction from all $t\in\mathbb{N}^0$ to a finite number of verifiable inequalities is described in \cite{Krupa2024} for polytopic sets.
The set $\mathcal{Z}_{\mathrm{a}}$ does not enforce $e_{\mathrm{a}}(t)=0$ to handle the case when $e(t)=0$ is not reachable within the constraints.
Instead, we want to achieve convergence to the optimal reachable reference within the constraints.
To do so, we solve the following optimization problem
\begin{fleqn}[\parindent]
\begin{equation*}
	\begin{aligned}
		V^*(\xi) = \min_{\xi^d(0),\,\Delta u(t)} &&V(M_e\xi^d(0))
	\end{aligned}
\end{equation*}\vspace{-1.4em}
\begin{equation*}
	\begin{aligned}
		&\text{s.t. } &&\xi(t\!+\!1) = A_\xi \xi(t) + B_\xi \Delta u(t) &&\xi(0)=\xi& \\
		&&&\xi^d (t\!+\!1) = A_\xi\xi^d(t) &&\xi^d(0)\in\mathcal{Z}_{\mathrm{a}}&\\
		&&&\lim\limits_{t\to\infty}\xi(t) - \xi^d(t) = 0
	\end{aligned}
\end{equation*}
\end{fleqn}
with $V(e) = \|e\|^2_P$.
As the extended state $\xi(t)$ is not controllable, we also optimize over the input $\Delta u(t)$ to ensure that $\xi^\mathrm{d}(t)$ lies within a reachable manifold.
We denote the optimal initial condition for the reference $\xi^\mathrm{d}(\xi)$ as a function of the extended state.
The matrix $P\succ 0$ is a design parameter. 
In the following, we choose $P$ such that $V(e([t+1,t+2-n_p]))-V(e([t,t-n_p+1])) \leq 0$ for any trajectory with $p(z)\{e(t)\}=0$.
A possible choice is a Lyapunov function of the corresponding stable state space realization.
Next, we introduce the quadratic stage cost
\begin{equation*}
	l(\xi,\Delta u) = \|C_{\Delta x}\xi\|^2_{Q_{\Delta x}} + \|C_{e}\xi\|^2_{Q_e} +\|\Delta u\|^2_{R}.
\end{equation*}
with $Q_{\Delta x}\in\mathbb{R}^{n\times n}$ positive semi-definite, and $Q_{e}\in\mathbb{R}^{p n_p\times p n_p}$ and $R\in\mathbb{R}^{m \times m}$ positive definite to steer $\Delta x(t)$, $\Delta u(t)$, and the output to zero.

Now, we consider the initialization of the nominal state. While enforcing $T_v\xi_\mathrm{n}(t)=T_v\xi(t)$ ensures that $\mathbf{\hat{v}}_\mathrm{n}(t)\in\mathbb{V}$, it does not guarantee that the optimization problem remains feasible, as the embedded disturbance $T_v\xi(t)$ changes depending on the actual $w_x(t)$ and $w_e(t)$.
Inspired by \cite{schlueter2022}, we use convex interpolation from \eqref{eq:MPC:VConvexInterpolation} with $\mathbf{\bar{v}}(t)$ being a feasible fallback solution and $\tau\in[0,1]$ as interpolation variable.
We introduce the cost
\begin{align*}
	l_\tau(\tau) = \lambda \tau^2
\end{align*}
with $\lambda \geq 0$ to penalize deviations from the current extended state $T_v\xi(t)$.
Using all of this, we can now state the \ac{MPC}
\begin{subequations}
	\begin{align} \label{eq:MPC:MPCCost}
		J^*(\xi,\mathbf{\bar{v}}) &= \min_{\Delta u_k,\, \xi_0,\, \xi_{\mathrm{a},0},\tau}  \sum_{k=0}^{N-1} l(\xi_k-\xi_{\mathrm{a},k},\Delta u_k) \nonumber\\
		&+l_\tau(\tau)+ V(M_e \xi_{\mathrm{a},0}) 
	\end{align} \vspace{-20pt}
	\begin{align}
		\text{s.t. }  
			&\xi_{k+1} = A_\xi\xi_{k} + B_\xi \Delta u_{k} &&  k=0,\ldots,N-1\label{eq:MPC:MPCDyn}\\
			&\xi_{\mathrm{a},k+1} = A_\xi\xi_{\mathrm{a},k} && k = 0,\ldots, N-1\\
			&C_x\xi_k \!\in\! \mathbb{X}\!\ominus\!\mathbb{S}_x,\, u_k \!\in\!\mathbb{U}\!\ominus\! \mathbb{S}_u && k =0,\ldots,N \label{eq:MPC:MPCConstraints}\\
			&T_v\xi_0 = \tau\mathbf{\bar{v}} + (1-\tau) T_v\xi && \tau\in[0,1]  \label{eq:MPC:MPCInitDistState} \\
			&T_0(\xi-\xi_0)\in\mathbb{S} \label{eq:MPC:MPCInitContrState}\\
			&\xi_N = \xi_{\mathrm{a},N} \quad \xi_{\mathrm{a},0} \in\mathcal{Z}_{\mathrm{a}}\label{eq:MPC:MPCXTerminal}
	\end{align}
\end{subequations}
with $u^*_0=\Delta u_0^* - \sum_{i=1}^{n_p}p_{i}u_{-i}^*$, $\xi^*_0$ and $\xi^*_1$ being the corresponding optimizers with $\xi_{k}$ and $\Delta u_{k}$ taking the role of the nominal system from Section\,\ref{sec:MPC:RPI}.
The control input follows by solving the optimization problem $J^*(\xi(t),\mathbf{\bar{v}}(t))$ and evaluating
\begin{equation}\label{eq:MPC:ControlLaw}
	\begin{split}
		u(t) &= u^*_0(t) + K(\xi(t) - \xi^*_0(t)) \\
		\mathbf{\bar{v}}(t\!+\!1) &= T_v \xi^*_1(t).
	\end{split}
\end{equation}
This now has the classical structure of a tube-based \ac{MPC}-scheme with artificial references \cite{Limon2010}. 
Now, we show recursive feasibility, constraint satisfaction, and provide convergence conditions.
\begin{theorem} \label{thm:MPC:MainResult}
	Suppose $\mathbf{\bar{v}}(0)\in \mathbf{v}(0)\oplus\mathbb{V}$, $N>n_p(n+p+m)+n$, and $J^*(\xi(0),\mathbf{\bar{v}}(0))$ is feasible with $\xi(0)$ being generated according to \eqref{eq:Setup:Dynamic}-\eqref{eq:Setup:DynamicDisturbance}.
	If the control law \eqref{eq:MPC:ControlLaw} is applied to the system, then the following holds:
	\begin{enumerate}[i)]
		\item $J^*(\xi(t),\mathbf{\bar{v}}(t))$ is feasible for all $t\in\mathbb{N}$. \label{MPC:MPC:MainThmFeasibility}
		\item $x(t)\in\mathbb{X}$ and $u(t)\in\mathbb{U}$ for all $t\in\mathbb{N}$. \label{MPC:MPC:MainThmConstraints}
		\item If $\lambda=0$, then $\lim_{t\to\infty} \xi_0^*(t)-\xi^\mathrm{d}(\xi_0^*(t))=0$ \label{MPC:MPC:MainThmConvergenceLam0}
		\item If $\lambda>0$ and $p(z)\{w(t)\}=0$, then $\lim_{t\to\infty} \xi_0^*(t)- \xi^\mathrm{d}(\xi_0^*(t))=0$ and $\lim_{t\to\infty} e(t)- C_e\xi^\mathrm{d}(\xi_0^*(t))=0$. \label{MPC:MPC:MainThmConvergence}
		\item Suppose $w(t)\in\sigma_1\mathbb{W}$ for all $t\in\mathbb{N}^0$ with $\sigma_1\in[0,1)$. Further, suppose $\mathbf{\bar{v}}(t_0)\in \mathbf{v}(t_0)\oplus \sigma_2\mathbb{V}$ at time $t_0$ with $\sigma_2\in[0,1)$, then there exists an $\epsilon > 0$ such that for all $\Delta v\in\mathbb{R}^{n+p}$ with $\|\Delta v \|< \epsilon$ and $v(t_0)=\Delta v - \sum_{i=1}^{n_p}p_i v(t_0-i)$, properties \ref{MPC:MPC:MainThmFeasibility} - \ref{MPC:MPC:MainThmConvergence} still hold. \label{MPC:MPC:NoArtRobustness}
	\end{enumerate}
\end{theorem}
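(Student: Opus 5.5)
The argument follows the standard tube-based MPC with artificial references \cite{Mayne2005,Limon2010}. The new ingredients are the interpolation variable $\tau$ and the disturbance-embedding map $T_v$.

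\emph{Recursive feasibility (i).} At time $t+1$ I will build a candidate from the optimizer at time $t$ by shifting. I set $\Delta u_k(t+1)=\Delta u^*_{k+1}(t)$ for $k\le N-2$ and $\Delta u_{N-1}(t+1)=0$. I keep the artificial reference shifted, $\xi_{\mathrm{a},0}(t+1)=\xi^*_{\mathrm{a},1}(t)$, which is in $\mathcal{Z}_{\mathrm{a}}$ because $\mathcal{Z}_{\mathrm{a}}$ is invariant under $A_\xi$ by definition. Since $C_{\Delta x}\xi_{\mathrm{a}}=0$, the zero input keeps $\xi_N=\xi_{\mathrm{a},N}$, so the terminal constraint \eqref{eq:MPC:MPCXTerminal} holds.

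The key choice is $\tau=1$. Then \eqref{eq:MPC:MPCInitDistState} reads $T_v\xi_0(t+1)=\mathbf{\bar v}(t+1)=T_v\xi_1^*(t)$, so $\xi_0(t+1)=\xi^*_1(t)$ is consistent in the $T_v$-coordinates. Next I must verify \eqref{eq:MPC:MPCInitContrState}, i.e.\ $T_0(\xi(t+1)-\xi^*_1(t))\in\mathbb{S}$. For this I use the error dynamics of Section~\ref{sec:MPC:RPI}. The error $[\hat x,\hat x_{\mathrm c}]$ evolves under the stabilized closed loop, driven by $\hat v+w$ with $\hat v$ generated by $S$ from $\hat{\mathbf v}(t)=\mathbf v(t)-T_v\xi_0^*(t)$. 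By induction together with \eqref{eq:MPC:VConvexInterpolation}, $T_v\xi_0^*(t)\in\mathbf v(t)\oplus\mathbb{V}$, so $\hat v\in\mathbb{V}_1$. The RPI property for $\mathbb{V}_1\oplus\mathbb{W}$ then gives membership in $\mathbb{S}$. I must also maintain the invariant $\mathbf{\bar v}(t+1)\in\mathbf v(t+1)\oplus\mathbb{V}$; this follows from the $S$-invariance of $\mathbb{V}$.

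\emph{Constraint satisfaction (ii).} This follows from \eqref{eq:MPC:MPCConstraints} at $k=0$ together with $x-x_{\mathrm n}\in\mathbb{S}_x$ and $u-u_{\mathrm n}\in\mathbb{S}_u$. The horizon condition $N>n_p(n+p+m)+n$ will be used for the convergence part. It guarantees that any two reference trajectories in the reachable manifold can be joined within $N$ steps, since a controllability-type index for the extended system is at most that large.

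\emph{Convergence (iii)--(iv).} With the shifted candidate, the standard decrease gives $J^*(t+1)-J^*(t)\le -l(\xi^*_0-\xi^*_{\mathrm a,0},\Delta u^*_0)+\lambda(\tau^2(t+1)\text{-terms})$. For $\lambda=0$ the $\tau$ term vanishes, so I obtain $l(\cdot)\to0$, i.e.\ $\xi_0^*-\xi_{\mathrm a,0}^*\to0$. For $\lambda>0$ I will use $p(z)\{w\}=0$: the disturbance $w$ is then absorbed into $\mathbf v$, hence $\hat v\equiv0$ and the $\tau=0$ candidate is as good as $\tau=1$, so the same decrease holds. The error $\hat x$ then converges to zero by stability of the closed loop, which gives the claim on $e(t)$.

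It remains to show $\xi^*_{\mathrm a,0}\to\xi^{\mathrm d}(\xi_0^*)$. I will argue by contradiction, as in \cite{Limon2008}. Because $\sigma>1$ leaves slack, one can move the artificial reference by a small convex step toward $\xi^{\mathrm d}$. The horizon bound makes the reachable transfer feasible, and the choice of $P$ (nonincreasing $V$ along $p(z)$-trajectories) together with convexity of $V$ yields a strict cost decrease proportional to $V(M_e\xi_{\mathrm a})-V(M_e\xi^{\mathrm d})$. This contradicts optimality unless that gap vanishes. I expect this step to be the main obstacle: the reference set is a manifold of non-decaying trajectories rather than steady states, and the transfer must respect the uncontrollable $\mathbf v$-component fixed by $T_v\xi_0$.

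\emph{Robustness (v).} A jump $\Delta v$ shifts $\mathbf v(t_0+1)$ by a bounded amount. Since $\sigma_1,\sigma_2<1$, the sets $\mathbb{V}$ and $\mathbb{S}$ have margins, so for $\|\Delta v\|<\epsilon$ the invariants $\mathbf{\bar v}\in\mathbf v\oplus\mathbb{V}$ and the error in $\mathbb{S}$ remain valid. After this, the arguments for (i)--(iv) apply unchanged.
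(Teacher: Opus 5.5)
Parts (i)--(iii) and (v) broadly follow the paper's route. Your argument for (iv), however, rests on a false step: $p(z)\{w\}=0$ does not give $\hat v\equiv0$.

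Absorbing $w$ into $v$ only yields $T_v\xi(t)=\mathbf v(t)$. The nominal disturbance state is still $T_v\xi_0^*(t)=\tau^*(t)\mathbf{\bar v}(t)+(1-\tau^*(t))\mathbf v(t)$. The $T_v$-part of the nominal model evolves with $S$, so $\mathbf{\bar v}(t+1)=T_v\xi_1^*(t)=S\,T_v\xi_0^*(t)$. Hence the fallback mismatch obeys $\mathbf{\bar v}(t+1)-\mathbf v(t+1)=\tau^*(t)\,S\,(\mathbf{\bar v}(t)-\mathbf v(t))$. The hypothesis only places $\mathbf{\bar v}(0)$ in $\mathbf v(0)\oplus\mathbb V$, so this mismatch is generally nonzero. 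It is nonzero precisely in the situation of (v), where $\mathbf{\bar v}(t_0+1)$ was predicted before the jump.

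Consequently your decrease for $\lambda>0$ is not established. The shifted candidate with $\tau=1$ costs $\lambda(1-\tau^*(t)^2)\ge0$ more in the $\tau$-term. A $\tau=0$ candidate cannot reuse the shifted trajectory either: the $T_v$-component of the nominal state is uncontrollable and would have to start at $T_v\xi(t+1)\neq T_v\xi_1^*(t)$. This also invalidates the old artificial reference in \eqref{eq:MPC:MPCXTerminal}.

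The missing ingredient is an argument that the $\tau$-penalty becomes asymptotically harmless. The paper uses the product recursion for $\mathbf{\bar v}-\mathbf v$, together with stability of $S$ and $\tau\in[0,1]$, to conclude that either the mismatch vanishes (then $\tau^*=0$ is optimal) or $\tau^*(t)\to1$. Either way $\tau^*(t+1)-\tau^*(t)\to0$, and the shifted candidate with $\tau^\circ(t+1)=\tau^*(t)$ then reproduces the decrease of (iii) asymptotically.

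The same issue affects your claim on $e(t)$. If the mismatch persists (e.g.\ $\tau^*\to1$ with $p(z)=1-z^{-1}$), the error system is driven by a non-vanishing $p(z)$-generated signal, and $\hat x$ need not tend to zero. The paper instead uses $\hat{\mathbf v}(t+1)-S\hat{\mathbf v}(t)\to0$, together with the internal-model (offset-free) structure of $K$ and the well-posedness assumption \eqref{ass:Setup:Sylvester}. These reject such a signal at the output without requiring $\hat x\to0$.

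A smaller point in (iii): $l\to0$ does not immediately give $\xi_0^*-\xi_{\mathrm a,0}^*\to0$. The stage cost weights only $C_{\Delta x}\xi$ (semi-definitely), $C_e\xi$ and $\Delta u$. The paper therefore invokes observability of $(A,C)$ and the horizon bound at this step too, not only for the reachability argument.
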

\begin{proof}
	 Since $\xi(0)$ is generated according to \eqref{eq:Setup:Dynamic}-\eqref{eq:Setup:DynamicDisturbance} we use Lemma\,\ref{thm:MPC:Model} for prediction.
	
	\ref{MPC:MPC:MainThmFeasibility} This follows from standard arguments by using the time-shifted optimal solution of the previous time step and extending it by the optimal artificial reference
	\begin{align*}
		\xi^\circ(t+1) &= [\xi_1^*(t),\ldots,\xi_N^*(t),A_\xi \xi_N^*(t)] \\
		u^\circ(t+1) &= [\Delta u_1^*(t),\ldots,\Delta u_{N-1}^*(t),0]
	\end{align*}
	with $\xi_{\mathrm{a},0}^\circ(t+1) = A_\xi\xi_{\mathrm{a},0}^*(t)$.
	Using $\tau^\circ =1$ allows us to use nominal disturbance state from the previous time step.
	By construction of the \ac{RPI}-sets it also holds $T_0(\xi(t+1)-\xi_1^*(t))\in\mathbb{S}$, such that the candidate solution $^\circ$ is feasible.
	
	\ref{MPC:MPC:MainThmConstraints} follows by the construction of the \ac{RPI}-sets and the constraint tightening in \eqref{eq:MPC:MPCConstraints}. $T_v \xi_0^*(t) \in \mathbf{v}(t) \oplus \mathbb{V}$ follows from invariance of $\mathbb{V}$ with respect to $p(z)$ and convexity. Furthermore, as $T_0 (\xi(t)-\xi_0^*(t))\in\mathbb{S}$, such that it must hold $x(t)-C_x\xi_0^*(t)\in\mathbb{S}_x$ and $u(t)-u_0^*(t)\in\mathbb{S}_u$.
	
	\ref{MPC:MPC:MainThmConvergenceLam0} We consider the previous candidate solution $\circ$. 
	Using standard arguments from \cite{Mayne2000} and $\lambda=0$, it follows
	\begin{align*}
		&J^*(\xi(t+1),\mathbf{\bar{v}}(t+1)) - J^*(\xi(t),\mathbf{\bar{v}}(t))\\ &\leq  
		 -l(\xi_0^*(t)-\xi_{\mathrm{a},0}^*(t),\Delta u_0^*(t)). \label{eq:MPC:Proof:LyapDecrease}
	\end{align*}
	As the stage cost is non-negative, we can conclude $\lim\limits_{t\to\infty} l(\xi_0^*(t)-\xi_{\mathrm{a},0}^*(t),\Delta u_0^*(t)) = 0$.
	Furthermore, as $Q_e\succ 0$, $R\succ 0$ and $(A,C)$ observable with $N>n_p(n+p+m)+n$, it must hold 
	\begin{align*}
		\lim\limits_{t\to\infty} \xi_0^*(t) - \xi_{\mathrm{a,0}}^*(t) = 0 &&
		\lim\limits_{t\to\infty}\Delta u_0^*(t) = 0
	\end{align*}
	Next, we show $\lim_{t\to\infty} \xi_0^*(t)-\xi^\mathrm{d}(\xi_0^*(t))=0$ by a proof of contradiction.
	The proof follows standard arguments for the artificial references \cite{Limon2008,Limon2010,braendle2025}.
	For space reasons, we shorten some steps.
	First, we consider the converged state with $\xi_0^*(t) = \xi_{\mathrm{a,0}}^*(t)$, but $V(M_e\xi_{\mathrm{a,0}}^*(t))> V(M_e\xi^\mathrm{d}(\xi^*_{\mathrm{a},0}(t))$. 
	By choosing $\Delta u_k^*(t) = 0$, we arrive at the optimal trajectory with vanishing stage cost and $J^*(\xi(t),\mathbf{\bar{v}}(t)) = V(M_e \xi_{\mathrm{a,0}}^*(t))$.
	Next, we show that if $V(M_e\xi_{\mathrm{a,0}}^*(t))> V(M_e\xi^\mathrm{d}(\xi^*_{\mathrm{a},0}(t))$, we can construct a different artificial reference, which leads to a lower total cost, contradicting optimality.
	We define the following candidate solution for the artificial reference
	\begin{align*}
		\bar{\xi}_{\mathrm{a},0} = \gamma \xi^*_{\mathrm{a},0}(t) + (1-\gamma)\xi^\mathrm{d}(\xi^*_{\mathrm{a},0}(t))
	\end{align*}
	with $\gamma\in[0,1)$.	
	As $\mathcal{Z}_\mathrm{a}$ is constructed to lie strictly within the constraints, $(A,B)$ being controllable, and $N>n(n_p+1)+n_pm + n_pp$, we can find sufficiently a large $\bar{\gamma}\in[0,1)$ such that for any $\gamma\in(\bar{\gamma},1)$, $\bar{\xi}_{\mathrm{a},0}$ describes an admissible artificial reference with extended state $\bar{\xi}_k$ and input trajectory $\Delta\hat{u}_k$ to reach  $\bar{\xi}_{\mathrm{a},N}$ without leaving $\mathbb{X}\ominus\mathbb{S}_x$ and $\mathbb{U}\ominus\mathbb{S}_u$. Further, it holds
	\begin{align*}
		\sum_{k=0}^{N-1}\!\! l(\bar{\xi}_k(t)\!-\!\bar{\xi}_{\mathrm{a},k}(t), \Delta \bar{u}_k) \!\leq\! \bar{\beta}\|\bar{\xi}_{\mathrm{a},0}(t)\!-\!\xi^*_{\mathrm{a},0}(t)\|^2
	\end{align*}
	for some $\bar{\beta}>0$ \cite{Limon2008,Limon2010,braendle2025}.
	Evaluating the total cost for the constructed trajectory yields
	\begin{align*}
		&J^*(\xi(t),\mathbf{\bar{v}}(t)) \leq  (1-\gamma)^2\bar{\beta} \|\xi^*_{\mathrm{a},0}(t)-\xi^\mathrm{d}(\xi^*_{\mathrm{a},0}(t)) \|^2 \\
		&+ \gamma V(M_e\xi^*_{\mathrm{a},0}(t)) + (1-\gamma)V(M_e \xi^\mathrm{d}(\xi^*_{\mathrm{a},0}(t))),
	\end{align*}
	where the last term follows from convexity of $V$.
	Since $J^*(\xi(t),\mathbf{\bar{v}}(t)) = V(M_e \xi_{\mathrm{a,0}}^*(t))$, we can sort everything with respect to $1-\gamma$ and $(1-\gamma)^2$. 
	As $V(M_e\xi_{\mathrm{a,0}}^*(t)))> V(M_e\xi^\mathrm{d}(\xi^*_{\mathrm{a},0}(t)))$, choosing $\gamma\in(\bar{\gamma},1)$ sufficiently large leads to a contraction, concluding this part of the proof.
	
	\ref{MPC:MPC:MainThmConvergence} As $p(z)\{w(t)\}=0$, we can absorb $w(t)$ in $v(t)$ and only consider the case $w(t)=0$.
	First, we show that $\lim\limits_{t\to\infty} \tau^*(t+1)-\tau^*(t)=0$ with $\tau^*(t)$ being optimal solution for $\tau$ at time $t$.
	To this end, simple calculations yield 
	\begin{align*}
		\mathbf{\hat{v}}(t)= \left(\prod\limits_{k=0}^{t-1}(1-\tau^*(k)) \right)S^t \mathbf{\hat{v}}(0)
	\end{align*}
	with $\mathbf{\hat{v}}(t) = \mathbf{\bar{v}}(t)- \mathbf{v}(t)$ and $S$ being the corresponding state transition matrix of \eqref{eq:Setup:DynamicDisturbance} such that $\mathbf{v}(t+1)=S\mathbf{v}(t)$.
	As $\tau\in[0,1]$ and $S$ is constructed from a stable system, $\mathbf{\bar{v}}(t)$ either converge to the origin, for which $\tau^*(t)=0$ is the optimal solution or $\lim\limits_{t\to\infty} \tau^*(t)=1$.
	Both imply $\lim\limits_{t\to\infty} \tau^*(t+1)-\tau^*(t)=0$.
	Now, we use the candidate solution $^\circ$ with $\tau^\circ(t+1)=\tau^*(t)$ and apply the same steps from \ref{MPC:MPC:MainThmConvergenceLam0}, such that $\lim_{t\to\infty} \xi_0^*(t) -\xi^\mathrm{d}(\xi_0^*(t))=0$. 
	Lastly, $\lim_{t\to\infty} e(t)- C_e\xi^\mathrm{d}(\xi_0^*(t))=0$ follows, as $\lim\limits_{t\to\infty} \hat{\mathbf{v}}(t+1)-S\hat{\mathbf{v}}(t)=0$. 
	Furthermore, \eqref{ass:Setup:Sylvester} ensures a unique state and input trajectory for a given $\mathbf{\hat{v}}(t)$ satisfying $\mathbf{\hat{v}}(t+1)=S\mathbf{\hat{v}}(t)$, together with the offset-free controller from Section\,\ref{sec:MPC:Controller} and $u(t) = u^*_0(t) + K(\xi(t) - \xi^*_0(t))$, we conclude $\lim_{t\to\infty} e(t)- C_e\xi^\mathrm{d}(\xi_0^*(t))=0$.
	
	\ref{MPC:MPC:NoArtRobustness} 
	Due to $w(t)\in\sigma_1\mathbb{W}$ with $\sigma_1\in[0,1)$, $p(z)$ being stable and $\mathbb{W}$ being compact, we can find an $\epsilon_1>0$ such that for all $\|\Delta v\|<\epsilon_1$ and $t\in\mathbb{N}$, it holds  $\tilde{w}(t)\coloneq w(t)+\Delta v(t)\in\mathbb{W}$ with
	\begin{align*}
		\Delta v(t)=\begin{cases}
			0, &\text{ if } t<t_0\\
			\Delta v, &\text{ if } t=t_0\\ 
			-\sum_{i=1}^{n_p}p_i\Delta v(t-i), &\text{ if } t>t_0
		\end{cases}
	\end{align*}
	such that $x(t_0+1)\in\mathbb{X}$.
	Furthermore, as $\mathbf{\bar{v}}(t_0)\in\mathbf{v}(t)\oplus \sigma_2\mathbb{V}$ with $\sigma_2\in[0,1)$, we can find an $\epsilon_2>0$ such that for all $\|\Delta v\|<\epsilon_2$, it holds $\mathbf{\bar{v}}(t_0+1)\in\mathbf{v}(t+1)\oplus \mathbb{V}$.
	Together with the artificial disturbance $\tilde{w}(t)\in\mathbb{W}$, properties \ref{MPC:MPC:MainThmConvergenceLam0} - \ref{MPC:MPC:MainThmConvergence} follow directly from their corresponding proofs and construction of the \ac{RPI}-sets.
\end{proof}
Theorem\,\ref{thm:MPC:MainResult}\, \ref{MPC:MPC:MainThmFeasibility} and \ref{MPC:MPC:MainThmConstraints} show recursive feasibility, and constraint satisfaction, when applying the proposed controller. 
Further, Theorem\,\ref{thm:MPC:MainResult} \ref{MPC:MPC:MainThmConvergenceLam0} and \ref{MPC:MPC:MainThmConvergence} provide convergence conditions to the optimal reachable output trajectory. 
Note, due to \eqref{eq:MPC:MPCInitDistState} and \eqref{eq:MPC:MPCInitContrState}, it holds $T_0(\xi(t)-\xi_0^*(t))\in\mathbb{S}$ and $T_v(\xi(t)-\xi_0^*(t))\in\mathbb{V}\oplus(-\mathbb{V})$, such that $\xi(t)$ is always within a fixed set around $\xi_0^*(t)$.
A valid initializer for $\mathbf{\bar{v}}(0)$ is $\mathbf{\bar{v}}(0)=T_v\xi(0)$.  
As before for Lemma\,\ref{thm:MPC:Model}, we require an initialization phase of $n_p$ steps to have access to $\xi(0)$. 
Hence, any change in $v_x(t)$ and $v_e(t)$ not modeled by \eqref{eq:Setup:DynamicDisturbance} requires a new initialization phase.
To address this limitation, Theorem\,\ref{thm:MPC:MainResult}\,\ref{MPC:MPC:NoArtRobustness} can be used. 
By using an enlarged set for $\mathbb{W}_x$, and $\mathbb{W}_e$, any small change in $v_x(t)$ and $v_e(t)$ can be included in $w_x(t)$ and $w_e(t)$ as long as the $\mathbf{\bar{v}}(t)$ still constitutes a feasible fallback solution.
After this step, the new disturbance state is again embedded in the extended state.
This allows the controller to adapt to slow changes in $v_x(t)$ and $v_e(t)$. 
\section{Experiment}
In this section, we validate the proposed \ac{MPC} on a four-tank system \cite{Johansson2000}. 
It  consists of two Quanser coupled tanks, as shown in Fig.\;\ref{fig:Ex:Setup}. 
\begin{figure}[t]
	\centering
	\includegraphics[width=0.32\linewidth]{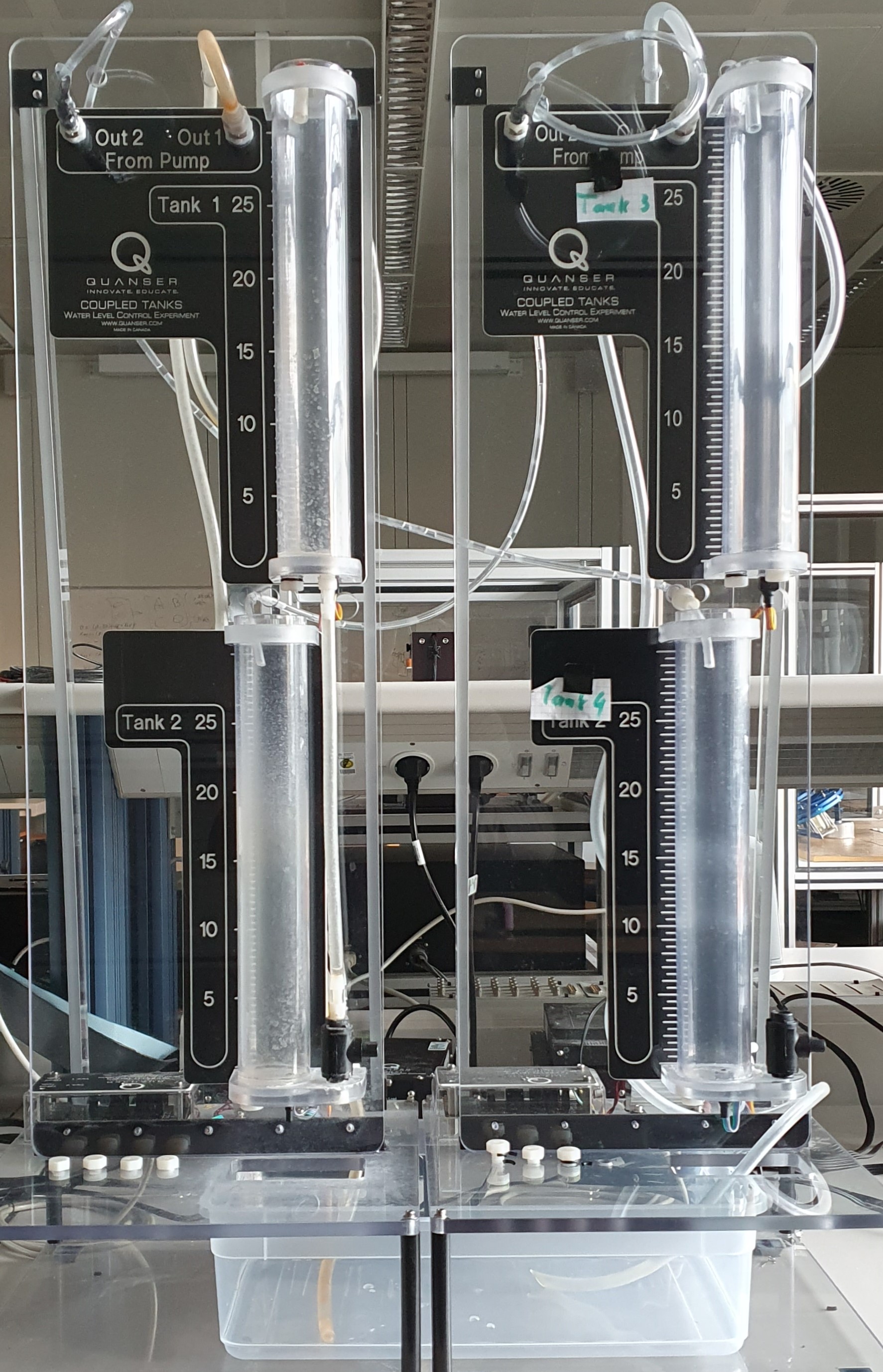}
	\caption{Four-tank system consisting of two Quanser Coupled Tanks.}
	\label{fig:Ex:Setup}\vspace{-18pt}
\end{figure}
The continuous-time linearized dynamics can be expressed by
\begin{align*}
	\begin{bmatrix}
		\dot{h}_1(t) \\ \dot{h}_2(t) \\ \dot{h}_3(t) \\ \dot{h}_4(t)
	\end{bmatrix} \!&=\!
	\begin{bmatrix}
		\!\!-a_1\!\! &    \!0\! &    \!0\! & \!0\! \\
		 \!\!a_1\!\! & \!\!-a_2\!\! &    \!0\! & \!0\! \\
		   \!0\! &    \!0\! & \!\!-a_1\!\! & \!\!0\!\!\\
		   \!0\! &    \!0\! &  \!\!a_1\!\! & \!\!-a_2\!
	\end{bmatrix} \begin{bmatrix}
	h_1(t) \\ h_2(t) \\ h_3(t) \\ h_4(t)
	\end{bmatrix} \!+\! 
	\begin{bmatrix}
		\!b_1\!\! & \!0\! \\
		\!0\!   & \!\! b_2\! \\
		\!0\!   & \!\! b_1\! \\
		\! b_2\!\! & \!0\!
	\end{bmatrix}
	\begin{bmatrix}
		u_1(t) \\ u_2(t)
	\end{bmatrix} \\
	e(t) &= \begin{bmatrix}
		h_2(t) - h_{2,\mathrm{ref}}(t)\\
		h_4(t) - h_{4,\mathrm{ref}}(t)
	\end{bmatrix}
\end{align*}
\begin{figure}[t]
	\centering
	\input{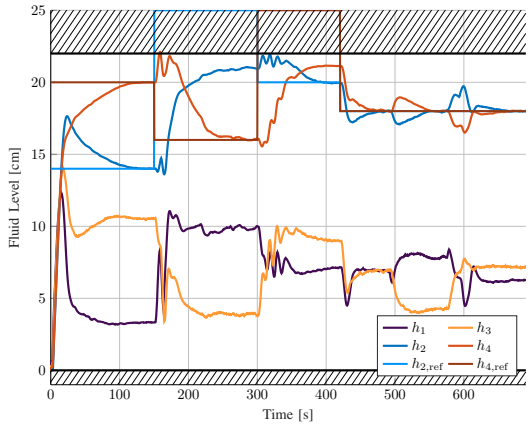}\vspace{-10pt}
	\caption{Four-tank experiment with piecewise constant references.}
	\label{fig:Ex:Real}\vspace{-18pt}
\end{figure}

with $a_1 = 0.0751\tfrac{1}{\mathrm{s}}$, $a_2=0.0371\tfrac{1}{\mathrm{s}}$, $b_1=0.151\tfrac{\mathrm{cm}}{\mathrm{V\cdot s}}$ and $b_2=\tfrac{\mathrm{cm}}{\mathrm{V\cdot s}}$. 
The operating point is $h_\mathrm{op}=[8,18,8,18]^\top\mathrm{cm}$ and $u_\mathrm{op}=[8,8]^\top\mathrm{V}$.
The states $h_1(t)$ and $h_3(t)$ describe the fluid level of the upper left and upper right tanks and $h_2(t)$ and $h_4(t)$ of the lower left and lower right tank.
We prescribe references $h_{2,\mathrm{ref}}(t)$ and $h_{4,\mathrm{ref}}(t)$ for the lower tanks. 
However, the \ac{MPC} does not require explicit knowledge of the references, it only uses the error feedback $e(t)$.
The inputs $u_1(t)$ and $u_2(t)$ are the voltage applied to two water pumps.
We discretize the system using Euler-forward discretization with a sampling time of $1\mathrm{s}$.
Furthermore, we impose the constraints $u_i(t)\in[0,16]\mathrm{V}$ for $i=1,2$ and $h_i(t)\in[0,17]\mathrm{cm}$ for $i =1,3$ and $h_i(t)\in[0,22]\mathrm{cm}$ for $i=2,4$.
We choose a prediction horizon of $N=40$ and consider constant exogenous signals, i.e.,  $p(z)=1-z^{-1}$.
We choose $Q_{\Delta x} = 0.1I$, $Q_e=5I$, $R=2I$, $P=10I$ and $\lambda = N$.
For $\mathbb{W}_x$ and $\mathbb{W}_e$, we use $w_x(t)\in[-10^{-4},10^{-4}]^4\mathrm{cm}$ and $w_e(t)\in5[-10^{-3},10^{-3}]^2\mathrm{cm}$ with controller gains
\begin{align*}
	[K_x,K_e]\!=\!\begin{bmatrix}
		\!-1.67 & \!\!\!\!-2.07 &\!\!\!\! 0.86 &\!\!\!\! 0.79 & \!\!\!\!-0.997 & \!\!\!\!0.0284 \\
		\!0.94 & \!\!\!\!0.89 & \!\!\!\!-1.76 & \!\!\!\!-2.20 & \!\!\!\!0.0332 & \!\!\!\!-0.1068
	\end{bmatrix}\!.
\end{align*}
The \ac{RPI}-sets were computed using MPT3 \cite{MPT3} with a constraint tightening of about $1\mathrm{cm}$ for the water levels and about $0.1\mathrm{V}$ for the pumps. 
As in \cite{Mayne2005}, constraint satisfaction is only guaranteed if $w(t)\in\mathbb{W}$, which is not the case for large setpoint changes.
To address this, we further impose two additional constraints in the \ac{MPC}, $u_0 + K(\xi(t)-\xi_0)\in\mathbb{U}$ and $T_v(\xi_0 - \xi(t))\in\mathbb{V}\oplus(-\mathbb{V})$ to ensure the applied input is within the constraints and to choose a sufficiently small $\tau$ to ensure that $\mathbf{\bar{v}}(t)$ stays close to $T_v\xi(t)$.
The results are illustrated in Fig.\;\ref{fig:Ex:Real}.
The \ac{MPC} successfully tracks the specified reference, and if it is not contained within the constraints, it automatically finds a close setpoint within the constraints. 
The gap between the boundary of the constraints and their respective water levels at $t=250\mathrm{s}$ and $t=400\mathrm{s}$ stems from the constraint tightening $\mathbb{X}\ominus\mathbb{S}_x$ to provide robustness against disturbances.
At $t\approx150\mathrm{s}$, a minor constraint violation occurs, as the sudden change in $e(t)$ combined with unmodeled dynamics exceeds the disturbance set.
However, the \ac{MPC} recovers.
At $t\approx 500\mathrm{s}$ and $t\approx580\mathrm{s}$, we manually opened and then closed a valve to alter the outgoing water flow of $h_1(t)$.
Despite this, the \ac{MPC} recovers the desired water level, demonstrating its ability to also reject disturbances in the dynamics.
\section{Conclusion}
In this work, we proposed a novel robust \ac{MPC} scheme to solve the output regulation problem with unknown disturbances.
By combining \ac{IMMPC} with a tube-based approach, we are able to account for disturbances generated by a known signal generator, but also for unmodeled but bounded disturbances, while ensuring constraint satisfaction, recursive feasibility and disturbance rejection.
The proposed controller has been successfully applied to a four-tank system, demonstrating its capability to reject unknown disturbances effectively.
\bibliographystyle{IEEEtran}
\bibliography{References}

\end{document}